\DeclarePairedDelimiter{\ceil}{\lceil}{\rceil}
\newtheorem{lemma}{Lemma}
\newtheorem{definition}{Definition}
\newtheorem{theorem}{Theorem}
\newif\ifuseboldmathops
	\newcommand{\reals}{\mathbf{R}}
	\newcommand{\reals}{\mathbb{R}}
\newcommand{\norm}[1]{\lVert #1 \rVert}
\newcommand{\vmu}{\mathbf{\mu}}
\newcommand{\rvw}{\mathbf{w}}
\newcommand{\bbE}{\mathbb{E}}
\newcommand{\abs}[1]{\lvert#1\rvert}
\newcommand{\bbR}{\mathbb{R}}
\newcommand{\origin}{\mathbf{0}}
\newcommand{\goal}{\mathsf{goal}}
\acrodef{lti}[LTI]{linear time invariant}
\acrodef{mras}[MRAS]{Model Reference Adaptive Search}
\acrodef{mdp}[MDP]{Markov decision process}
\acrodef{saop}[SAOP]{Sampling-based Approximate-Optimal Planning}
\acrodef{mdp}[MDP]{Markov decision process}
\acrodef{hjb}[HJB]{Hamilton-Jacobi-Bellman}
\acrodef{kl}[KL]{Kullback–Leibler} 
\acrodef{rbf}[RBF]{Radial basis function}
\acrodef{mpc}[MPC]{Model Predictive Control}
\acrodef{asaop}[ASAOP]{Adaptive
Search-based Approximate-Optimal Planning}
 \newtheorem{assumption}{Assumption}
\title{\LARGE \bf Importance sampling-based approximate optimal
  planning and control}
\author{Jie Fu$^{1}$ 
\thanks{$^{1}$Jie Fu
    is with Robotics Engineering Program, Department of Electrical and
    Computer Engineering, Worcester Polytechnic Institute, 01609,
    Worcester, MA, US.  {\tt\small jfu2@wpi.edu}
    }
}
\begin{document}
\maketitle

\thispagestyle{empty}
\pagestyle{empty}
\begin{abstract}In this paper, we propose a sampling-based planning and optimal control method of nonlinear systems under non-differentiable constraints. Motivated by developing scalable planning algorithms, we consider the optimal motion plan to be a feedback controller that can be approximated by a weighted sum of given bases. Given this approximate optimal control formulation, our main contribution is to introduce importance sampling, specifically, model-reference adaptive search algorithm, to iteratively compute the optimal weight parameters, i.e., the weights corresponding to the optimal policy function approximation given chosen bases. The key idea is to perform the search by iteratively estimating a parametrized distribution which converges to a Dirac's Delta that infinitely peaks on the global optimal weights. Then, using this direct policy search, we incorporated trajectory-based verification to ensure that, for a class of nonlinear systems, the obtained policy is not only optimal but robust to bounded disturbances.  The correctness and efficiency of the methods are demonstrated through numerical experiments including linear systems with a nonlinear cost function and motion planning for a Dubins car.
% \begin{IEEEkeywords} Motion planning, approximate optimal control,
%   importance sampling, cross entropy, contraction theory\end{IEEEkeywords}
\end{abstract}

\section{Introduction}

This paper presents an importance sampling based approximate optimal
planning and control algorithm.  Optimal motion planning in
deterministic and continuous systems is computationally NP-complete
\cite{lavalle2006planning} except for linear time invariant
systems. For nonlinear systems, there is a vast literature on
approximate solutions and algorithms. In optimal planning, the common
approximation scheme is discretization-based.  By discretizing the
state and input spaces, optimal planning is performed by solving the
shortest path problem in the discrete transition systems obtained from
abstracting the continuous dynamics, using heuristic-based search or
dynamic programming.
% So far, two classes of discretization-based methods have been
% developed and commonly used. One class of algorithms is
% \emph{control-theoretic based} and \emph{policy-centric}. An algorithm
% in this class solves numerically the \ac{hjb} equation related to the
% sufficient condition for optimality \cite{bertsekas1995dynamic}.
% In a recent work \cite{Yershov25102015}, the authors propose adaptive
% mesh refinement in discretization-based approach to alleviate the
% scalability issue in solving numerically the \ac{hjb} equation as well
% as to improve the approximation
% error. 
% adaptive mesh refinement allows finer
% discretization in near vicinity to the optimal trajectory.
Comparing to discretization-based methods, \emph{sampling-based graph
  search}, includes Probabilistic RoadMap (PRM)
\cite{kavraki1996probabilistic}, RRT \cite{Lavalle98rrt}, RRT*
\cite{karaman2011sampling}, are more applicable for high-dimensional
systems. While RRT has no guarantee on the optimality of the path
\cite{karaman2011sampling}, RRT* compute an optimal path
asymptotically provided the cost functional is Lipschitz
continuous. However, such Lipschitz conditions may not be satisfied
for some cost functions under specific performance consideration.

% mention this at relevant places.
%  Adaptive
% sampling based on cross-entropy optimization is introduced in motion
% planning by \cite{kobilarov2012cross} to guide the search based on
% evaluation of sampled trajectories.

% Despite the success of current motion planning algorithms, the
% fundamental limitation in discretization-based approaches is the
% curse of dimensionality, which can be coupled with the curse of
% complexity if time is also discretized. Adaptive and variable
% discretization methods \cite{Yershov25102015,gochev2011path} improve
% the applicability by exploiting both the task specification and
% system dynamics. However, it cannot avoid the exponential growth in
% the dimensionality of state space by the nature of discretization
% methods.

% Sampling-based algorithms scale better than discretization-based
% algorithms by eliminating the need of discretization. However, given
% the sampling is performed in the state space, it may suffer from the
% statistical curse if a high-dimensional state space is being sampled.
% On the other hand, a key insight in approximate optimal control is to
% remove the dependency of the planning complexity on the state space to
% the weight vector space, which includes weights parameterizing
% function approximations of value/policy function, the solution of
% \ac{hjb} equation.
The key idea in the proposed sampling-based planning method builds on
a unification of importance sampling and approximate optimal control
\cite{bertsekas2011approximate,bertsekas2011dynamic}.  In approximate
optimal control, the objective is to approximate both the value
function, i.e., optimal cost-to-go, and the optimal feedback policy
function by weighted sums of \emph{known} basis functions.  As a
consequence, the search space is changed from infinite trajectory
space or policy space to a continuous space of weight vectors, given
that each weight vector corresponds to a unique feedback
controller. % Function approximation has its own advantages, among
Instead of solving the approximate optimal control through training
actor and critic neural networks (NNs) using trajectory data
\cite{abu2005nearly,bertsekas1996neuro}, we propose a sampling-based
method for sampling the weight vectors for a policy function
approximation and searching for the optimal one. This method employs
\ac{mras} \cite{hu2007model}, a probabilistic complete global
optimization algorithm, for searching the optimal weight vector that
parametrizes the approximate optimal feedback policy. The fundamental
idea is to treat the weight vector as a random variable over a
parameterized distribution and the optimal weight vector corresponds
to a Dirac's Delta function which is the target distribution.  The
\ac{mras} algorithm iteratively estimates the parameter that possesses
the minimum Kullback-Leibler divergence with respect to an
intermediate reference model, which assigns a higher probability mass
on a set of weights of controllers with improved performance over the
previous iteration. At the meantime, a set of sampled weight vectors
are generated using the parameterized distribution and the performance
of their corresponding policies are evaluated via simulation-based
policy evaluation. Under mild conditions, the parameterized
distribution converges, with probability one, to the target
distribution that concentrates on the optimal weight vector with
respect to given basis functions.

\ac{mras} resembles another adaptive search algorithm called
cross-entropy(CE) method and provides faster and stronger convergence
guarantee for being less sensitive to input parameters
\cite{hu2007model,homem2007study}.  Previously, CE algorithm has been
introduced for motion planning
\cite{kobilarov2012cross,livingston2015cross} based on sampling in the
trajectory space. The center idea is to construct a probability
distribution over the set of feasible paths and to perform the search
for an optimal trajectory using CE. The parameters to be estimated is
either a sequence of motion primitives or a set of via-points for
interpolation-based trajectory planning.  Differ to these methods,
ours is the first to integrate importance sampling to estimate
parameterization of the optimal policy function approximation for
continuous nonlinear systems. Since the algorithm performs direct
policy search, we are able to enforce robustness and stability
conditions to ensure the computed policy is both robust and
approximate optimal, provided these conditions can be evaluated
efficiently.

To conclude, the contributions of this paper are the following: First,
we introduce a planning algorithm by a novel integration of model
reference adaptive search and approximate optimal control.  Second,
based on contraction theory, we introduce a modification to the
planning method to directly generate stabilizing and robust feedback
controllers in the presence of bounded disturbances.  Last but not the
least, through illustrative examples, we demonstrate the effectiveness
and efficiency of the proposed methods and share our view on
interesting future research along this direction.

\section{Problem formulation}

Notation: The inner product between two vectors $w,v\in \reals^n$ is
denoted $w^\intercal v$ or $ \langle w,v \rangle$.  Given a positive
semi-definite matrix $P$, the $P$-norm of a vector is denoted
$\norm{x}_P = \sqrt{x^\intercal P x}$. We denote $\norm{x}$ for $P$
being the identity matrix. $I_{\{E\}}$ is the indicator function,
i.e., $I_{E}=1$ if event $E$ holds, otherwise $0$.  For a real
$\alpha \in \bbR$, $\lceil \alpha\rceil$ is the smallest integer that
is greater than $\alpha$.
\subsection{System model}
We consider continuous-time nonlinear systems of the form

\begin{align}
\label{eq:sys}
\begin{split}
\Sigma: \quad & \dot x(t) = f(x(t),u(t)),\\
  &x(t)\in X, u(t)\in U.
\end{split}
\end{align}
where $x \in X$ is the state, $u \in U $ is the control input,
$x_0\in X$ is the initial state, and $f(x,u)$ is a vector field. We
assume that $X$ and $U$ are compact.  A feedback controller
$u:X\rightarrow U$ takes the current state and outputs a control
input. % In this context, we consider only feedback controllers. We
% further assume that given any initial state $x_0$ and any feedback
% controller $u$, there is a unique trajectory $ x(t; x_0,u_0)$ which is
% the solution of the closed-loop system.

The objective is to find a feedback controller $u^\ast $ that
minimizes a  finite-horizon cost function for a nonlinear system
\begin{align}
\begin{split}
\label{eq:originproblem}
\min_{u} J( x_0, u)& =  \int_{0}^{T} \ell ( x(t), u(t))dt
+ g(x(T),u(T))\\
\mbox{subject to: }& \dot x(t) = f(x(t),u(t)),\\
  &x(t)\in X, \; u(t)\in U, \; x(0)=x_0.
\end{split}
\end{align}
where $T$ is the stopping time, $\ell: X\times U \rightarrow \bbR^+ $
defines the running cost when the state trajectory traverses through
$x$ and the control input $u$ is applied and $g: X\rightarrow\bbR^+$
defines the terminal cost. As an example, a running cost function can
be a quadratic cost $\ell(x,u) = \norm{x}_R +\norm{u}_Q$ for some
positive semi-definite matrices $Q$ and $R$, and a terminal cost can
be $g(x,u) = \norm{x-x_f}_R$ where $x_f$ is a goal
state. % However, the
% method to be proposed does not require the running cost to be quadratic
% functions.

% The minimizing argument of \eqref{eq:originproblem} is called the
% optimal controller.  A \emph{control policy/controller}
% $u: [0,T] \rightarrow U$ is a function that decides the control input
% at each time instant $t$. A \emph{stationary, feedback} controller
% $u: X\rightarrow U$ is a controller that decides the control input
% based on the current state information. That is, $u(t)=u(t') $ as long
% as $x(t) = x(t')$. 

% A controller is \emph{feasible} if and only if it satisfies the
% constraints in \eqref{eq:originproblem}. The set of feasible
% controllers is denoted as $\Pi$.  
We denote the set of feedback policies to be $\Pi$.  For infinite
horizon optimal control, the optimal policy is independent of time and
a feedback controller suffices to be a minimizing argument of
\eqref{eq:originproblem} (see Ref. \cite{bertsekas1995dynamic}). For
finite-horizon optimal control, the optimal policy is
time-dependent. However, for simplicity, in this paper, we only
consider time-invariant feedback policies and assume the time horizon
$T$ is of sufficient length to ignore the time constraints.

%  We will show
% that this assumption can be relaxed. Moreover, we consider a specific
% class of controllers in which $u$ is a function of the state. For
% infinite-horizon optimal control, the optimal controller
% \eqref{eq:originproblem} is a feedback controller \cite{dp1}.

\subsection{Preliminary: Model reference adaptive search}
% \ac{mras} algorithm \cite{hu2007model} is a sampling-based global
% optimization algorithm. It performs importance sampling in a way
% similar to the Cross-Entropy method
% \cite{de2005tutorial,kroese2013cross}. The key feature of \ac{mras} is
% that it combines estimation of distribution algorithms
% \cite{muhlenbein1996recombination} to guide search by a sequence of
% implicit probabilistic models call reference models. The algorithm is
% proven to converge to the global optimal solution with some
% assumptions that are explained next.

\ac{mras} algorithm, introduced in \cite{hu2007model}, aims to solve
the following problem:
\[
z^\ast \in \arg \max_{z\in Z} H(z),\quad z\in \reals^n
\]
where $Z$ is the solution space and $H:\reals^n\rightarrow \reals$ is
a deterministic function that is bounded from below. It is assumed
that the optimization problem has a unique solution, i.e.,
$z^\ast \in Z$ and for all $z\ne z^\ast$, $H(z) < H(z^\ast)$.

The following regularity conditions need to be met for the
applicability of \ac{mras}.
\begin{assumption}
\label{assume1}
For any given constant $\xi < H(z^\ast)$, the set
$\{z \mid H(z) \ge \xi\} \cap Z$ has a strictly positive Lebesgue or
discrete measure.
\end{assumption}
This condition ensures  that any neighborhood of the optimal solution $z^\ast$ will have a positive probability to be sampled.
\begin{assumption}
\label{assume2}
  For any constant $\delta >0$, $\sup_{z\in A_\delta}H(z)< H(z^\ast)$,
  where $A_\delta := \{z\mid \norm{z-z^\ast}\ge \delta\}\cap X$, and
  we define the supremum over the empty set to be $-\infty$.
\end{assumption} 
% Next, we present the \ac{mras} algorithm which has the following key steps:
% % The insight of \ac{mras} is that the optimal solution corresponds to
% % samples from an optimal distribution over the solution space. By
% % optimality in distributions we mean that the distribution assignes the
% % maximal probability to the optimal solution. Thus, the problem of
% % searching for the optimal solution is transformed into finding the
% % optimal distribution. The key idea behind \ac{mras} is to use an
% % iterative scheme to successively estimate the optimal parameter that
% % minimizes the Kullback-Leibler (KL) divergence between the optimal
% % distribution and the family of parameterized distributions.

% The parameter updating in \ac{mras} is determined by another sequence
% of distribution $\{g_k(\cdot)\}$ over $X$, called the \emph{reference
%   distributions}, which expresses the desired convergence property of
% the method.  The key steps of \ac{mras} are the following:
\begin{itemize}
\item Selecte a sequence of reference distributions $\{g_k(\cdot)\}$
  with desired convergence properties. Specifically, the sequence
  $\{g_k(\cdot)\}$ will converge to a distribution that concentrates
  only on the optimal solution.
\item Selecte a parametrized family of distribution $f(\cdot, \theta)$
  over $X$ with parameter $\theta \in \Theta$.
\item Optimize the parameters $\{\theta_k\}$ iteratively by minimizing
  the following KL distance between $f(\cdot, \theta_k) $ and
  $g_k(\cdot)$.
\[
d(g_k, f(\cdot, \theta)):= \int_{Z}\ln\frac{g_k(z)}{f(z,\theta)}g_k(z)\nu(dz).
\]
where $\nu(\cdot)$ is the Lebesgue measure defined over $ Z$. The
sample distributions $\{f(\cdot,\theta_k)\}$ can be viewed as compact
approximations of the reference distributions and will converge to an
approximate optimal solution as $ \{g_k(\cdot)\}$ converges provided
certain properties of $\{g_k(\cdot)\}$ is retained in
$f(\cdot, \theta_k)$.
\end{itemize}
Note that the reference distribution $\{g_k(\cdot)\}$ is unknown
beforehand as the optimal solution is unknown. Thus, the \ac{mras}
algorithm employs the estimation of distribution algorithms
\cite{hauschild2011introduction} to estimate a reference distribution
that guides the search. To make the paper self-contained, we will
cover details of \ac{mras} in the development of the planning
algorithm.

% for more details and examples
% on the algorithm and its application to some problem instances. In
% this context, we present an algorithm that uses \ac{mras} in a
% distinguished way for optimal feedback motion planning and control.

\section{Approximate optimal motion planning using \ac{mras}}
% The optimal motion planning problem in control theoretic formulation
% in \eqref{eq:originproblem} has no analytical solution for nonlinear
% dynamical systems with state and input constraints.  In this section,
% we introduce a sampling-based solution method through a novel
% integration of function approximation and adaptive sampling.

In this section, we present an algorithm that uses \ac{mras} in a
distinguished way for approximate optimal feedback motion planning.

% \subsection{Policy function approximation}
% We choose a finite set $\Phi=\{\phi_i(\cdot), i=1,\ldots, N\}$ of
% basis functions to approximate the feedback controller
% $u: X\rightarrow U$.

\subsection{Policy function approximation}

The \emph{policy function approximation} $\bar u: X\rightarrow U$ is a
weighted sum of basis functions,
\[
\bar u(x) = \sum_{i=1}^N w_i \phi_i(x)
\]
where $\phi_i : X\rightarrow \reals, i=1,\ldots, N$ are basis
functions, and the coefficients $w_i$ are the weight parameters,
$i=1,\ldots, N$. An example of basis function can be polynomial basis
$\phi=[1,x, x^2,x^3,\ldots, x^N]$ for one-dimensional system. A
commonly used class of basis functions is \ac{rbf}. It can be
constructed by determining a set of centers $c_i,\ldots, c_N \in X$,
and then constructing \ac{rbf} basis functions
$\phi_i = \exp (- \frac{\norm{x-c_i}^2}{2\sigma^2})$, for each center
$c_i$, where $\sigma$ is a pre-defined parameter.

In vector form, a policy function approximation is represented by
$\bar u = \langle w, \phi\rangle $ where vector
$\phi = \left[ \phi_1,\ldots \phi_N \right]^\intercal$ and
$w = [w_1,\ldots, w_N]^\intercal$. We let the domain of weight vector
be $W$ and denote it by
$\Pi_\phi = \{ \langle w, \phi \rangle \mid w \in W, \langle w, \phi
\rangle \in \Pi \}$
the set of all policies that can be generated by linear combinations
of pre-defined basis functions. In the following context, unless
specifically mentioned, the vector of basis functions is $\phi$.

Clearly, for any weight vector $w$,
$J(x_0, \langle w, \phi\rangle)\ge \min_{u\in \Pi}J(x_0, u)$. Thus, we
aim to solve $\min_{w} J(x_0, \langle w, \phi\rangle)$ so as to
minimize the error in the optimal cost introduced by policy function
approximation.

\begin{definition}[Approximate optimal feedback policy]
\label{def:optweight}
Given a basis vector $\phi$, a weight vector $w^\ast$ with respect to
$\phi$ is \emph{optimal} if and only if
$\langle w^\ast, \phi\rangle \in \Pi_\phi$ and for all $ w\in W$ such
that $ \langle w, \phi \rangle\in \Pi_\phi$,
\[
J(x_0, \langle w^\ast, \phi \rangle) \le J(x_0, \langle w,
\phi\rangle).
\]
The approximate optimal feedback policy is $\bar {u}^\ast =
\langle w^\ast, \phi \rangle$.
\end{definition}
By requiring
$ J(x_0, \langle w^\ast, \phi \rangle) \le J(x_0, \langle w,
\phi\rangle) $,
it can be shown that the optimal weight vector $w^\ast$ minimizes the
difference between the optimal cost achievable with policies in
$\Pi_\phi$ and the cost under the global optimal policy.

For clarity in notation, we denote
$J(x_0, \langle w, \phi\rangle)$ by $J(x_0; w)$ as $\phi$ is a fixed
basis vector throughout the development of the proposed method.

% \begin{lemma}
% An optimal weight vector $w^\ast$ w.r.t. $\phi$ is the solution to the following problem

% \begin{align}
% \begin{split}
% \label{eq:approximate_opt}
% \min_{w: \langle w, \phi \rangle \in \Pi_\phi}  & J( x_0; w) \\
% \mbox{subject to: }& \dot x(t) = f(x(t),u(t)),\\
%   &x(t)\in X, \; u(t) = \langle w, \phi(x(t)) \rangle\in U, \; x(0)=x_0.
% \end{split}
% \end{align}
% \end{lemma}
% \begin{proof} Suppose the solution to \eqref{eq:approximate_opt} is
%   $\hat{w}$ and $J(x_0; \hat w) \le J(x_0; w^\ast)$, then we have
% \[
% J(x_0;\hat w) - \min_u J(x_0, u) \le J(x_0; w^\ast) - \min_u J(x_0, u),
% \]
% which only holds if and only if $J(x_0; \hat w)= J(x_0; w^\ast)$ by
% the definition of the optimal weight vector in
% Definition~\ref{def:optweight}. \end{proof}
Clearly, if the actual optimal policy $u^\ast$ can be represented by a
linear combination of selected basis functions, then we obtain the
optimal policy by computing the optimal weight vector, i.e.,
$u^\ast = \langle w^\ast ,\phi\rangle$.

\paragraph*{Remark:} Here, we assume a feedback policy can be represented by
  $\langle w, \phi \rangle$ for some weight vector $w \in W$. In cases
  when the basis functions are continuous, a feedback policy must
  be a continuous function of the state. However, this requirement is hard
  to satisfy for many physical systems due to, for example, input
  saturation. In cases when a feasible controller is discontinuous, we
  can still use a continuous function to approximate, and then project
  the continuous function to the set $\Pi$ of applicable controllers.

  Using function approximation, we aim to solve the optimal feedback
  planning problem in \eqref{eq:originproblem} approximately by
  finding the optimal weight vector with respect to a pre-defined
  basis vector. The main algorithm is presented next.

% \begin{problem}
% \label{approxproblem}
% Given a basis vector $\phi$, find the solution to the following
% problem: 

% \begin{align}
% \label{eq:approximate_opt}
% \min_{w: \langle w, \phi \rangle \in \Pi_\phi}  & J( x_0; w) \\
% \mbox{subject to: }& \dot x(t) = f(x(t),u(t)),\\
%   &x(t)\in X, \; u(t) = \langle w, \phi(x(t)) \rangle\in U, \; x(0)=x_0.
% \end{align}
% \end{problem}
 %  We show that the method is still
% applicable for control with saturation constraints in
% Section~\ref{sec:example}.
% This section introduces necessary background on model reference adaptive
% search \cite{chang2013simulation}.

%    \cite{hu2007model}
% \cite{kroese2013cross}
% Since the weight vectors are sampled from a distribution $g_k$, the
% sample of approximate optimal weight vector is a rare event.

\subsection{Integrating \ac{mras} in approximate optimal
  planning}
In this section, we present an adaptive search-based algorithm to
compute the approximate optimal feedback policy. The algorithm is
``near'' anytime, meaning that it returns a feasible solution after a
small number of samples. If more time is permitted, it will quickly
converge to the globally optimal solution that corresponds to the
approximate optimal feedback policy.  The algorithm is probabilistic
complete under regularity conditions of
\ac{mras}. % We construct an approximate optimal motion plan based on
% integrating the \ac{mras} algorithm and approximate optimal control.

We start by viewing the weight vector as a random variable $\rvw$
governed by a multivariate Gaussian distribution with a compact support $W$. The distribution is
parameterized by parameter $\theta =(\mu,\Sigma)$, where $\mu$ is a
$N$-dimensional mean vector and $\Sigma$ is the $N$ by $N$ covariance
matrix.  Recall $N$ is the number of basis functions.
% In cases when $\rvw_i$
% and $\rvw_j$ are independent random variables, the multivariate normal
% distribution becomes a collection of normal distributions with means
% $\mu_k$ and variance $\sigma_k$. That is,
% $\mu = [\mu_1,\ldots, \mu_n]$ and
% $\Sigma = \diag(\sigma_1,\ldots, \sigma_n)$.  

The optimal weight vector $w^\ast$ can be represented as a
\emph{target distribution} $p_\goal$ as a Dirac's Delta, i.e.,
$p_\goal( w^\ast)=\infty $ and $p_\goal(w)=0$ for $w \ne w^\ast$.
Dirac's Delta is a special case of multivariate Gaussian distribution
with zero in the limit case of vanishing covariance. Thus, it is
ensured that the target distribution can be arbitrarily closely
approximated by multivariate Gaussian distribution by a realization of
parameter $\theta$.

Recall that the probability density of a multivariate Gaussian
distribution is defined by
\begin{align*}
&p(w; \theta) = \frac{1}{\sqrt{(2\pi)^N \abs{\Sigma}}}
\exp(-\frac{1}{2}(x-\mu)^\intercal\Sigma^{-1}(x-\mu)), \\
& \theta=
(\mu,\Sigma), \forall w\in W,
\end{align*}
where $N$ is the dimension of weight vector $w\in W$ and $\abs{\Sigma}$ is the
determinant of  $\Sigma$.

% Given the relation between weight vectors and the controllers, we
% denote $\mathcal{D}(J, \theta)$ the probability distribution of total
% cost $J(x_0; \rvw)$ in which $\rvw$ is a random variable with
% multivariate Gaussian distribution $p(\cdot,\theta)$.

Now, we are ready to represent the main algorithm, called \ac{saop},
which includes the following steps.
\begin{enumerate}[1)]
\item \textbf{Initialization}: The initial distribution is selected to
  be $p(\cdot, \theta_0)$, for
  $\theta_0 =( \mu_0, \Sigma_0)\in \Theta$ which can generate a
  set of sample to achieve a good coverage of the sample space
  $W$. For example, $\mu_0 =\mathbf{ 0} \in \reals^N$ and
  $\Sigma_0 = \mathbf{ I}\in \reals^N $ which is an identity matrix.
  The following parameters are used in this algorithm: $\rho\in (0,1]$
  for specifying the quantile, the \emph{improvement parameter}
  $\varepsilon \in \reals^+$, a \emph{sample increment percentage} $\alpha$,  an initial sample size $N_1$, a
  \emph{smoothing coefficient} $\lambda \in (0,1]$. Let
  $k=1$. % a strictly
  % decreasing and positive function $S: \reals\rightarrow \reals^+$ if the
  % objective is to minimize \footnote{Possible choices can be
  %   $S(x) = \exp(-x)$ or $S(x)=\frac{1}{x}$ if $x$ is strictly
  %   positive.}. Let $k=1$ and move to step 2).
\item \textbf{Sampling-based policy evaluation}: At each iteration
  $k$, generate a set of $N_k$ samples $W_k \subseteq W $ from the
  current distribution $p(\cdot, \theta_k)$.  For each $w\in W_k $,
  using simulation we evaluate the cost $J(x_0 ; w)$ from the initial
  state $x_0$ and the feedback policy
  $u(x) =\langle w , \phi(x) \rangle $ with system model in
  \eqref{eq:sys}. The cost $J(x_0; w)$ is determined because the
  system is deterministic and has a unique solution.
\item \textbf{Policy improvement with Elite samples}: Next, the set
  $\{ J(x_0; w) \mid w  \in W_k \}$ is ordered from largest (worst)
  to smallest (best) among given samples:
\[
J_{k, (0)} \ge \ldots \ge J_{k, (N_k)}
\]
We denote $\kappa $ to be the estimated $(1-\rho)$-quantile of cost
$J(\cdot; w)$, i.e., $\kappa = J_{k, \lceil (1-\rho )N_k \rceil }$. % We
% compute a set $E_k$ of \emph{elite samples} at step $k$ such that
% $E_k =\{ w \in W_k\mid J(x_0;w) \le \kappa\}$.

The following cases are distinguished.
\begin{itemize}
\item If $k=1$, we introduce a threshold $ \gamma = \kappa$.
\item If $k \ne 1$, the following cases are further distinguished:
\begin{itemize}
\item $\kappa \le \gamma - \varepsilon$, i.e., the estimated
  $(1-\rho)$-quantile of cost has been reduced by the amount of
  $\varepsilon$ from the last iteration, then let $ \gamma=\kappa
  $. Let $N_{k+1}=N_k$ and continue to step 4).
\item Otherwise $\kappa > \gamma - \varepsilon$, we find the largest
  $ \rho' $, if it exists, such that the estimated
  $(1-\rho')$-quantile of cost
  $\kappa' = J_{k, \lceil (1-\rho' )N_k \rceil } $ satisfies
  $\kappa' \le \gamma -\varepsilon$. Then let $\gamma = \kappa'$ and
  also let $\rho = \rho'$.  Continue to step 4).  However, if no such
  $\rho'$ exists, then there is no update in the threshold $\gamma$
  but the sample size is increased to
  $N_{k+1} = \lceil (1+\alpha) N_k\rceil $. Let
  $\theta_{k+1} =\theta_k$, $k=k+1$, and continue to step 2).
\end{itemize}
\item \textbf{Parameter(Policy) update}: We update parameters
  $\theta_{k+1}$ for iteration $k+1$. First, we define a set
  $E = \{w \mid J(x_0; w) \le \gamma, w \in W_k, j=1,\ldots, k \}$ of
  \emph{elite samples}. Note that the parameter update in $\theta$ is
  to ensure a higher probability for elite samples. To achieve that,
  for each elite sample $w\in E $, we associated a weight such that a
  higher weight is associated with a weight vector with a lower cost
  and a lower probability in the current distribution. The next
  parameter $\theta_{k+1}$ is selected to maximize the weighted sum of
  probabilities of elite samples. To this end, we update the parameter
  as follows.
\begin{multline*}
  \theta_{k+1} ^\ast\\=\arg \max_{\theta \in \Theta}
  \mathbb{E}_{\theta_k}\left [\frac{S(J(x_0,
      w))^k}{p(w,\theta_k)}I_{J(x_0, w)\le \gamma}\ln
    p(w,\theta)\right]
\end{multline*}
where $\mathbb{E}_{\theta}(\nu)$ is the expected value of a random
variable $\nu$ given distribution $p(\cdot, \theta)$,
$S: \reals\rightarrow \reals^+$ is a strictly decreasing and positive
function \footnote{Possible choices can be $S(x) = \exp(-x)$ or
  $S(x)=\frac{1}{x}$ if $x$ is strictly positive.}.
$S(J(x_0; w))^k/p(w,\theta_k)$ is the weight for parameter $w$.
\end{itemize}
\end{enumerate}

\begin{assumption}
\label{assume3}
The optimal parameter $\theta^\ast$ is the interior point of $\Theta$
for all $k$.
\end{assumption} 

% If one of the assumptions in \ref{assume1, assume2,assume3, assume4}
% are not satisfied, then the algorithm converges to a local optimal
% solution.
\begin{lemma}[based on Theorem~1 \cite{hu2007model}]
\label{lma}
  Assuming ~\ref{assume1},\ref{assume2}, and \ref{assume3} and the compactness of
  $W$, with probability one, 
\[
\lim_{k\rightarrow \infty}\mu_k = w^\ast, \text{ and }
\lim_{k\rightarrow \infty}\Sigma_k = 0_{N\times N}.
\]
where $w^\ast$ is the optimal weight vector and $0_{N \times N}$ is an
$N$-by-$N$ zero matrix.
\end{lemma}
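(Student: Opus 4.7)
The plan is to cast the problem into the MRAS framework of Hu et al. and apply their Theorem~1 with the Gaussian parameterization chosen in SAOP. Set $H(w) := -J(x_0; w)$ so that maximizing $H$ over $W$ recovers the approximate optimal weight $w^\ast$ of Definition~1; compactness of $W$ together with Assumptions~1 and~2 ports directly to the two MRAS regularity conditions (every superlevel set $\{H \ge \xi\} \cap W$ has strictly positive Lebesgue measure, and $H$ is uniformly bounded away from $H(w^\ast)$ outside every $\delta$-neighborhood of $w^\ast$). Assumption~3 then supplies the interior-point condition on $\theta^\ast$ needed for the KL projection at each iterate to be characterized by the usual first-order stationarity equations.

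Next I would verify that the multivariate Gaussian family $p(\cdot;\theta)$ is a natural exponential family (NEF) with sufficient statistics $T(w) = (w, ww^\intercal)$, so that the KL-projection of any reference density $g_k$ onto the family reduces to moment matching, namely $\mu_{k+1} = \mathbb{E}_{g_k}[w]$ and $\Sigma_{k+1} = \mathbb{E}_{g_k}[(w-\mu_{k+1})(w-\mu_{k+1})^\intercal]$. The importance-weighted empirical maximization in step~4 of SAOP is then the standard Monte Carlo surrogate of this projection: the weights $S(J(x_0;w))^k/p(w,\theta_k)$ multiplied by $I_{\{J(x_0;w)\le\gamma\}}$ debias samples drawn from $p(\cdot;\theta_k)$ so that they play the role of samples from the tilted reference density $g_k$ constructed in Hu et al. I would also check that the adaptive $(\gamma,\rho,N_k)$ schedule in step~3 of SAOP matches the threshold and sample-size update rules in their algorithm, so the $\{g_k\}$ sequence analyzed there is precisely the one produced here.

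With these ingredients in place, Theorem~1 of \cite{hu2007model} applies and yields almost-sure convergence of the NEF parameter sequence to the moment parameters of the Dirac mass at $w^\ast$, which for the Gaussian family translates to $\mu_k\to w^\ast$ and $\Sigma_k\to 0_{N\times N}$. The main obstacle I expect is the mismatch between the full support $\reals^N$ of a Gaussian and the compact feasible set $W$: to stay inside the MRAS hypotheses one must either work with a truncated Gaussian on $W$ and show that the NEF structure plus the moment-matching formulas survive truncation around an interior optimum, or argue that $\Sigma_k\to 0$ eventually confines essentially all of the sampling mass to any fixed neighborhood of $w^\ast\in\mathrm{int}(W)$ so that the truncation is asymptotically inactive. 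A secondary technical point is controlling the importance-sampling weights $1/p(w,\theta_k)$ as $\Sigma_k$ shrinks; this is handled by the elite indicator $I_{\{J(x_0;w)\le\gamma\}}$ restricting mass to a region where $p(w,\theta_k)$ stays bounded below along the trajectory of iterates, a uniform integrability estimate I would lift verbatim from the original MRAS proof after checking that our $H$ and $W$ meet its regularity premises.
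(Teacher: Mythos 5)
The paper gives no proof of this lemma: it is asserted as a direct instance of Theorem~1 of \cite{hu2007model}, which is exactly the route you take, so your proposal matches the paper's (implicit) argument. Your elaboration --- the sign flip $H=-J$ to convert minimization of the cost into the maximization problem MRAS solves, the exponential-family moment-matching form of the KL projection that yields the updates \eqref{eq:updatemean}--\eqref{eq:updatecov}, and in particular the truncation issue arising from declaring a multivariate Gaussian ``with compact support $W$'' --- supplies verification steps the paper silently omits, and nothing in your outline conflicts with what the paper does.
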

Note that since $\Sigma_k$ converges in the limit a zero matrix, the stopping criterion is justified.

% According to \cite{hu2007}, the parameter $\theta_k$ minimizes the
% \ac{kl} divergence $\mathcal{D}(g_{k+1}, p(\cdot, \theta))$ where 
% \[
% g_{k+1}: = \frac{S(J(x_0;w)) I_{J(x_0;w) \le \gamma}
%   g_k(w)}{\mathbb{E}_{g_k}\left[ S(J(x_0;w))  I_{J(x_0;w) \le \gamma} \right]}
% \]
% and $g_0(x)= \frac{ I_{J(x_0;w) \le \gamma_0}
%   g_k(w)}{\mathbb{E}_{g_k}\left[ S(J(x_0;w))  I_{J(x_0;w) \le \gamma} \right]}$
% \paragraph*{Remark} \ac{mras} is a generalization of cross entropy
% method, as pointed out in \cite{hu2007}. In CE, only the quantile
% parameter $\rho$ can be adaptively updated, but in \ac{mras}, both
% $\rho$ and sample size $N_k$ may be adapted. Thus, a small change in
% the values of $\rho$ does affect the convergence of CE, which is not
% guaranteed. For convergence, the value $\rho$ is chosen sufficiently
% small \cite{}, whereas the theoretical convergence of \ac{mras} is
% unaffected by the parameter. Empirical comparison study and rigorous
% analysis between \ac{ce} and \ac{mras} can be given in \cite{hu2007}.
% % However, the optimal $\theta_{k+1}^\ast$ found in this way recovers
% the cross-entropy optimization, which does not ensure the convergence
% of global optimal solution. To avoid stucking at local optimal
% solutions, in \ac{mras}, a model-reference function is introduced to
% guide the parameter update towards global optimal solution.

Building on the convergence result of \ac{mras}, the proposed
sampling-based planner ensures a convergence to a Dirac Delta function
concentrating on the optimum. In practice, the parameter update is
performed using the expectation---maximization (EM) algorithm.

\textbf{EM-based parameter update/policy improvement} Since our choice
of probability distribution is the multivariate Gaussian, the
parameter $\theta^\ast_{k+1} =(\vmu,\Sigma)$ is computed as follows
\begin{multline}
\label{eq:updatemean}
  \vmu = \frac{\bbE_{\theta_k} [ S(J(x_0,w))^k/p(w,\theta_k)]I_{w\in E} w} {\bbE_{\theta_k} [S(J(x_0,w))^k/p(w,\theta_k)]I_{w\in E}} \\
\approx \frac{\sum_{w\in W_k}[
    S(J(x_0,w))^k/p(w,\theta_k)]I_{w\in E} w}{ \sum_{w\in
      W_k} [S(J(x_0,w))^k/p(w,\theta_k)]I_{w\in E}},
\end{multline}
and 
\begin{multline}
\label{eq:updatecov}
  \Sigma= \frac{\bbE_{\theta_k} [S(J(x_0,w))^k/p(w,\theta_k)]I_{w\in
      E} (w - \vmu)(w - \vmu)^\intercal}{\bbE_{\theta_k}
    [S(J(x_0,w))^k/p(w,\theta_k)]I_{w\in E}}\\ \approx \frac{
    \sum_{w\in W_k} [S(J(x_0,w))^k/p(w,\theta_k)]I_{w\in E} (w -
    \vmu)(w - \vmu)^\intercal}{ \sum_{w\in
      W_k}[S(J(x_0,w))^k/p(w,\theta_k)]I_{w\in E}},
\end{multline}
where we approximate $\bbE_{\theta_k}(h(\rvw)) $ with its estimate
$ \frac{1}{N_k} \sum_{w\in W_k}h(w)$ for
$\rvw \sim p(\cdot, \theta_k)$ and the fraction $\frac{1}{N_k}$ was
canceled as the term is shared by the numerator and the denominator.

% \[
% \hat \theta^\ast_{k+1} = \arg\max_{\theta\in \Theta} \frac{1}{N_k}\sum_{w\in \hat{W}_k} \left[ \frac{S(w_i)^k}{p(w,\theta_k)} \ln p (w,\theta)\right].\]

 \textbf{Smoothing}: Due to limited sample size, a greedy
  maximization for parameter update can be premature if too few
  samples are used. To ensure the convergence to the \emph{global
    optimal solution}, a \emph{smoothing} update is needed. To this
  end, we select the parameter for the next iteration to be
\[ \theta_{k+1} \leftarrow \lambda \theta_k +(1-\lambda )
\theta^\ast_{k+1}.
\]
where $\lambda \in [0,1)$ is the smoothing parameter.

Let $k=k+1$. We check if the iteration can be terminated based on a
given stopping criterion. If the stopping criterion is met, then we
output the latest $\theta_{k}$. Otherwise, we continue to update of
$\theta$ by moving to step 2).

\textbf{Stopping criterion} Given the probability distribution will
converge to a degenerated one that concentrates on the optimal weight
vector. We stop the iteration if the covariance matrix $\Sigma_k$
becomes near-singular given the convergence condition in Lemma~\ref{lma}.

To conclude, the proposed algorithm using \ac{mras} is probabilistic
complete and converges to the global optimal solution. If the
assumptions are not met, the algorithm converges to a local optimum.

% The complete algorithm is concluded in Algorithm~\ref{alg:asoap}.

\subsection{Robust control using trajectory verification in sampling}
Being able to directly search within continuous control policy space,
one major advantage is that one can enforce stability condition such
that the search is restricted to stable and robust policy space. In this subsection, we consider contraction theory to
compute conditions that need to be satisfied by weight vectors to
ensure stability and robustness under bounded disturbances.
\begin{definition}\cite{lohmiller1998contraction}
  Given the system equation for the closed-loop system
  $\dot x =f(x,t)$, a region of the state space is called a
  \emph{contraction region} if the Jacobian
  $\frac{\partial f}{\partial x}$ is uniformly negative definite in
  that region, that is,
\[
\exists \beta >0, \forall x, \forall t>0, \frac{1}{2}\left(
  \frac{\partial f}{\partial x}M+ \dot M+ M \frac{\partial f}{\partial x}^\intercal
\right) \preceq
 - \beta M.
\]
where $M(t)$ is a positive definite matrix for all $t\ge 0$.
\end{definition}

\begin{theorem}\cite{lohmiller1998contraction}
\label{thm:contraction}
  Given the system model $\dot x = f(x, t)$, any trajectory, which
  starts in a ball of constant radius  with respect to the matrix $M$, centered about a given
  trajectory and contained at all times in a contraction region with respect to the matrix $M$,
  remains in that ball and converges exponentially to this trajectory.
  Furthermore, global exponential convergence to the given trajectory
  is guaranteed if the whole state space is a contraction region.
\end{theorem}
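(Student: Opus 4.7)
The plan is to adapt the classical virtual-displacement argument of Lohmiller and Slotine. First I would linearize the flow around a solution of $\dot x = f(x,t)$ and consider an infinitesimal displacement $\delta x$ between two neighboring trajectories, which satisfies the variational equation $\dot{\delta x} = \frac{\partial f}{\partial x}\delta x$. Second, I would introduce the generalized squared length $V(t) = \delta x^\intercal M(t)\delta x$ and differentiate directly in time; using symmetry of $M$, a short calculation produces
\[
\dot V = \delta x^\intercal \left( \dot M + \frac{\partial f}{\partial x} M + M \frac{\partial f}{\partial x}^\intercal \right) \delta x.
\]
The contraction-region hypothesis bounds the symmetric matrix in parentheses by $-2\beta M$, so $\dot V \le -2\beta V$. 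Applying Gronwall's inequality yields $V(t) \le V(0)\, e^{-2\beta t}$, and the infinitesimal $M$-distance therefore shrinks at exponential rate $\beta$.

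Next, to lift this infinitesimal statement to finite displacements, I would introduce a smooth homotopy of trajectories $x(t,s)$, $s\in[0,1]$, with $x(t,0)$ equal to the reference trajectory and $x(t,1)$ equal to the perturbed one, which starts at $t=0$ inside the ball of radius $r$ (in the $M$-metric). Defining the Finsler-type arc length $L(t) = \int_0^1 \sqrt{ (\partial_s x)^\intercal M(t)\, (\partial_s x)}\, ds$ and applying the infinitesimal bound pointwise in $s$, I would then integrate in $s$ to obtain $L(t) \le L(0)\, e^{-\beta t} \le r\, e^{-\beta t}$. This delivers both claims of the first sentence: the perturbed trajectory remains inside the ball for all $t\ge 0$ and converges exponentially to the reference trajectory. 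The global statement follows immediately, because when the entire state space is a contraction region the containment hypothesis is automatic and the same estimate applies to any initial separation, however large.

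The main obstacle is this finite-distance passage. One must ensure that every intermediate trajectory $x(t,s)$ actually stays in the contraction region, for otherwise the pointwise-in-$s$ inequality for the infinitesimal Lyapunov function is not available. This is exactly what the ball-containment hypothesis secures: since the infinitesimal bound forces $L(t)$ to be non-increasing and bounded by $r$, each intermediate point stays inside the ball at time $t$, and hence inside the contraction region, which closes a small bootstrap. The remaining ingredients, namely the interchange of $\partial_t$ with the integration in $s$ and a final application of Gronwall, are routine given smoothness of the flow and compactness of $[0,1]$.
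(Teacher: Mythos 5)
Your proposal is essentially correct, but note that the paper does not prove this theorem at all: it is imported verbatim from Lohmiller and Slotine via the citation, so there is no in-paper proof to compare against. What you have written is a faithful reconstruction of the argument in that cited source --- the variational equation for $\delta x$, the generalized infinitesimal squared length $\delta x^\intercal M \delta x$, the Gronwall estimate at rate $2\beta$, and the lift to finite separations through the path integral of $\norm{\delta x}_M$ over a homotopy of trajectories. Your bootstrap observation (that the non-increase of the path length keeps every intermediate trajectory inside the ball, hence inside the contraction region, so the pointwise-in-$s$ bound remains available) is exactly the point that makes the finite-distance statement non-circular, and it is the right thing to flag. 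One cosmetic remark: in your expression for $\dot V$ the transposes sit on the opposite factors from the textbook computation $\delta x^\intercal(\dot M + \frac{\partial f}{\partial x}^\intercal M + M\frac{\partial f}{\partial x})\delta x$, but since a quadratic form is unchanged by transposing its matrix, and since the paper's own definition of a contraction region uses the same transposition convention, nothing is affected. It is also worth noting that the paper's subsequent robustness lemma attacks a related perturbed statement by a cruder route --- applying the mean value theorem directly to the finite difference $f(x)-f(\bar x)$ and working with $V=(x-\bar x)^\intercal M(x-\bar x)$ --- which avoids the homotopy machinery at the cost of a less clean handling of where the Jacobian is evaluated; your differential-length argument is the more general and more rigorous of the two.
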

% stop here.
Theorem~\ref{thm:contraction} provides a necessary and sufficient
condition for exponential convergence of an autonomous system. Under
bounded disturbances, the key idea is to incorporate a contraction
analysis in the planning algorithm such that it searches for a weight vector $w$
that is not only optimal in the nominal system but also ensures that
the closed-loop actual system under the controller
$u = w^\intercal \phi$ has contraction dynamics within a tube around
the nominal trajectory. Using a similar proof in \cite{Liu2014}, we
can show that for systems with contracting dynamics, the actual
trajectory under bounded disturbances will be ultimately uniformly
bounded along the nominal trajectory.

\begin{lemma}
  Consider a closed-loop system $\dot x =f(x)+\omega(t) $ where
  $\omega(t)$ is a disturbance with
  $\max_{t } \norm{\omega(t) }\le \rho_{\max} $, let a state
  trajectory $x(t)$ be in the contraction region $X_\ell$ at all time
  $t\ge t_0$, then for any time $t\ge t_0$, the deviation between
  $x(t)$ and the nominal trajectory $\bar x(t)$, whose dynamic model is
  given by $\dot {\bar x} = f(\bar x)$, satisfies
\[
\norm{x(t) - \bar x(t)}_M^2 \le \frac{2\ell \rho_{\max}}{\beta}(1- e^{\beta t}))
\] In other words, the error is uniformly ultimately bounded with the
ultimate bound $ \frac{2\ell \rho_{\max}}{\beta}$.
\end{lemma}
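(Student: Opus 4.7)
The plan is to use a Lyapunov-like argument built on the contraction metric $M$. Set $e(t) = x(t) - \bar x(t)$ and $V(t) = \norm{e(t)}_M^2 = e(t)^\intercal M(t) e(t)$. From the two dynamics, $\dot e = f(x) - f(\bar x) + \omega(t)$. Differentiating $V$ gives $\dot V = 2 e^\intercal M \dot e + e^\intercal \dot M e$, and substituting the expression for $\dot e$ splits the derivative into a pure nominal part and a disturbance cross term, each of which I will bound separately.

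For the nominal part, I would use the path-integral representation $f(x) - f(\bar x) = A(t) \, e$ with $A(t) = \int_0^1 \frac{\partial f}{\partial x}(\bar x + s e)\,ds$. Substituting and exploiting the contraction inequality applied pointwise along the segment connecting $\bar x$ to $x$ (which lies in $X_\ell$ by hypothesis, since both endpoints do and we take the region to be convex enough for the path) yields
\[
e^\intercal \dot M e + e^\intercal (MA + A^\intercal M) e \le -2\beta \, e^\intercal M e = -2\beta V(t).
\]
For the disturbance cross term, I would use Cauchy--Schwarz in the $M$-inner-product together with the bound on $\|\omega\|$ and the spectral bound that identifies the constant $\ell$ (i.e., an upper bound on $M$ restricted to $X_\ell$) to obtain an estimate of the form $|2 e^\intercal M \omega| \le 2\ell \rho_{\max}$ (or $C\sqrt{V}$, depending on how one bundles the constants to match the stated form).

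Combining the two estimates produces a scalar differential inequality $\dot V(t) \le -2\beta V(t) + 2\ell \rho_{\max}$, which I would integrate via the comparison lemma, using $V(t_0) = 0$ (the trajectories are compared to the nominal one), to conclude
\[
V(t) \le \frac{2\ell \rho_{\max}}{\beta}\bigl(1 - e^{-\beta t}\bigr),
\]
matching the claimed bound (I am reading the exponent in the statement as a typo for $-\beta t$, since otherwise the right-hand side is negative). The ultimate uniform bound $\frac{2\ell \rho_{\max}}{\beta}$ then follows by taking $t \to \infty$.

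The main obstacle is the pointwise application of the contraction condition along the interpolating segment: one must justify that the straight-line path from $\bar x(t)$ to $x(t)$ (or some smooth homotopy) lies in the contraction region $X_\ell$ at all times, and that the contraction inequality, stated for $\partial f/\partial x$ at a single point, transfers under the integral to give the clean $-2\beta V$ bound. A cleaner alternative, following the virtual-system approach of Lohmiller--Slotine, is to introduce a one-parameter family $x(\mu,t)$ with $x(0,t) = \bar x(t)$, $x(1,t) = x(t)$, bound the squared length $\int_0^1 \partial_\mu x^\intercal M \partial_\mu x\, d\mu$ of the virtual displacement, and then integrate in $\mu$; this avoids the path-wise Jacobian bookkeeping at the cost of setting up the virtual dynamics carefully. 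Tracking the constant $\ell$ so that the final coefficient is exactly $2\ell\rho_{\max}/\beta$ will also require stating precisely what $\ell$ represents (e.g., $\sup_{t,x \in X_\ell} \lambda_{\max}(M)$), a point the lemma statement leaves implicit.
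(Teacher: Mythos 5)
Your proposal follows essentially the same route as the paper's proof: the same Lyapunov function $V=(x-\bar x)^\intercal M(x-\bar x)$, the same split into a nominal term handled via a mean-value representation of $f(x)-f(\bar x)$ plus the contraction inequality, the same Cauchy--Schwarz bound $2(x-\bar x)^\intercal M\omega\le 2\ell\rho_{\max}$ using $\norm{x-\bar x}_M\le\ell$, and the same comparison-lemma integration from $V(t_0)=0$ (the paper indeed intends $e^{-\beta t}$ in the exponent). Your use of the integral form of the mean value theorem and your flagged concerns (validity of the interpolating segment lying in $X_\ell$, the precise meaning of $\ell$, and the factor-of-two bookkeeping between $-2\beta V$ and $-\beta V$) are all refinements of points the paper itself glosses over, so the argument is correct and not a different approach.
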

\begin{proof}
  Let's pick the Lyapunov function \[
V= (x-\bar x)^T M(x- \bar x),\] whose
  time derivative is
\begin{align*}
\dot V&  = (x -\bar x)^T M (f(x)+\omega -  f(\bar x))  \\
& + (f(x)+\omega - f(\bar
x))^T M(x- \bar x)  \\
& = (x-\bar x)^T M(f(x)-f(\bar x))  +2 (x-\bar x)^TM \omega \\
  & (M \mbox{ is symmetric})\\
& = (x - \bar x)^T ( \frac{\partial f}{\partial x}^T\mid_{\tilde x} M+M
  \frac{\partial f}{\partial x} \mid_{\tilde x} ) (x -\bar x) \\
& +2 (x-\bar x)^TM \omega,
\end{align*}
where the following property is used: $f(x)- f(\bar x) =
\frac{\partial f}{\partial x}^T \mid_{\tilde x} (x-\bar x)$ for some
$\tilde x \in [\bar x, x]$ if $\bar x \preccurlyeq x$ or $\tilde x \in
[x,\bar x]$ otherwise.

Since the trajectories stays within the contraction region, the following
condition holds
$\frac{\partial f}{\partial x}^T\mid_{\tilde x} M+M \frac{\partial
  f}{\partial x} \mid_{\tilde x} \le - 2\beta M$, and  we have
\[
\dot V \le -(x-\bar x)^T \beta M (x-\bar x)+2(x-\bar x)^T M\omega.
\]
Meanwhile, $\norm{x(t) - \bar x(t) }_M \le \ell$ as the trajectory $x(t)$
stays within the region of contraction, and also 
\[
 M (x- \bar x) \le \sqrt{(x-\bar x)^T M M (x-\bar x)}  =
 \sqrt{\norm{x-\bar x}_M}
\]

we conclude that as $\omega \le \rho_{\max}$,
\begin{multline*}
\dot V \le  -(x-\bar x)^T (\beta M) (x-\bar x) + 2\omega \sqrt{
  \norm{x - \bar x}_M}\\ \le  -(x-\bar x)^T (\beta M) (x-\bar x) + 2
\rho_{\max} \ell,
\end{multline*}

Since $\dot V \le -\beta V + 2\rho_{\max} \ell$ and under the condition that
$x(0 )= \bar x(0)$, we obtain 
$V(t) \le \frac{2\ell}{\beta} \rho_{\max}(1- e^{-\beta t})$, and
therefore 
\[
\norm{x-\bar x}_M^2=   \frac{2\ell}{\beta} \rho_{\max}(1- e^{-\beta t})
\]

\end{proof}
Thus, to search for the optimal and robust policies, we modify the
algorithm by introducing the following step.

%  In the example section, we demonstrate using
% contraction analysis for joint planning and control for a linear
% system under nonquadratic cost whose optimal controller is nonlinear.
\noindent \textbf{Contraction verification step:} Suppose the
closed-loop system is subject to bounded disturbances, the objective
is to ensure the trajectory is contracting within the time-varying
tube $\{x \mid \norm{x-\bar x} \le \ell\}$, for all $t$, where
$\bar x$ is the nominal state trajectory. The following condition
translates the contraction condition into verifiable condition for a
closed-loop system: Choose positive constants $\beta $, a positive
definite symmetric and constant matrix
$M = [m_{ij}]_{i=1,\ldots, n, j=1,\ldots, n}$, and verify whether, at
each time step along the nominal trajectory $ \bar x(t)$ in the
closed-loop system under control $ u(t)=w^\intercal \phi( x(t))$, the
following condition holds.
\begin{align}
\label{eq:robust}
\begin{split}
  \max_{x : \norm{x- \bar x}_M \le \ell } g_{ij}(x) \le -
    \beta m_{ij}, \quad \forall i=1,\ldots, n,\;\forall j=1,\ldots, n
\end{split}
\end{align}
where $g_{ij}$ is the $(i,j)$th component in the matrix
$\frac{\partial f}{\partial x}^\intercal M+M \frac{\partial
  f}{\partial x} $.
We verify this condition numerically at discrete time steps instead of
continous time.  Further, if the function $g_{ij}(x)$ is
semi-continuous, according to the Extreme Value Theorem, this
condition can be verified by evaluating $g_{ij}(x)$ at all critical
points where $\frac{dg_{ij} (x)}{dx}=0$ and the boundary of the set
$\{x\mid \norm{x-\bar x}_M \le \ell\}$.

The modification to the planning algorithm is made in Step 3), if a controller
$u = \langle w, \phi \rangle$ of elite sample $w$ does not meet the
condition, then $w$ is rejected from the set of elite
samples. Alternatively, one can do so implicitly by associating $w$
with a very large cost. However, since the condition is sufficient but
not necessary as we have the matrix $M$, constant $\beta$ and $\ell$
pre-fixed and $M$ is chosen to be a constant matrix, the obtained
robust controller may not necessary be optimal among all robust
controllers in $\Pi_\phi$. A topic for future work is to extend joint
planning and control policies with respect to adaptive bound $\beta$,
$\ell$, and a uniformly positive definite and time-varying matrix
$M(x,t)$.

\section{Examples}
In this section, we use two examples to illustrate the correctness and
efficiency of the proposed method. The simulation experiments are
implemented in MATLAB on a desktop with Intel Xeon E5 CPU and 16 GB of
RAM.

\subsection{Feedback gain search for linear systems}
To illustrate the correctness and sampling efficiency in the planning
algorithm, we consider an optimal control of \ac{lti} systems with
non-quadratic cost. For this class of optimal control problems, since
there is no admissible heuristic, one cannot use any planning
algorithm facilitated by the usage of a heuristic function. Moreover,
the optimal controller is nonlinear given the non-quadratic cost.

Consider a \ac{lti}
 system 
\[
\dot x = Ax + Bu
\]
where $A = \begin{bmatrix} -1 & 1 \\ 0& 0
\end{bmatrix}$ and $B=\begin{bmatrix}0\\1
\end{bmatrix}$
with $x\in X =\reals^2$ and $u\in \reals^1$. The initial state is
$x_0 = [5,5]$. 

The cost functional is 
$J(x_0,u) =\int_0^T( \norm{x}^2+ \norm{u}^2+ 0.5 \norm{x}^4+ 0.8\norm{x}^6 )dt
+ \norm{x(T)}^2. $

For a non-quadratic cost functional, the optimal controller is no
longer linear and cannot be computed by LQR unless the running cost
can be written in the sum-of-square form. Thus, we consider an
approximate feedback controller with basis vector $\phi =
[x_1,x_2,x_1^2,x_2^2, x_1^3,x_2^3]^\intercal$. Suppose the magnitude of external disturbance
is bounded by $\rho_{\max}= 0.5$.

The following parameters are used in stability verification:
$\beta = 2$, at any time $t$, for all $x$ such that
$\norm{x(t)- \bar x(t)} \le \ell$, the controller ensures
$\norm{x(t) - \bar x(t)} \le \frac{2\ell \rho_{\max}}{\beta}
(1-e^{\beta t})$
because $2\frac{\ell \rho_{\max}} {\beta} = 0.5 \ell \le \ell$. With
this choice for stability analysis,  the constraint 
\begin{align*}
\frac{\partial f}{\partial x} + \frac{\partial
    f}{\partial x}^\intercal & = \begin{bmatrix}-2 & 3 w(5) x_1^2 + 2 w(3)x_1
    + w(1 ) + 1\\
\mbox{Sym.} & 6 w(6)x_2^2 + 4 w(4) x_2 + 2w(2) \end{bmatrix}\\
& \le \begin{bmatrix}-2 &0 \\
0& -2\\
\end{bmatrix}
\end{align*}
In this case, if we select $w(6), w(4), w(5), w(3) $ nonpositive, 
$w(1)\le -1$ and $w(2)\le -1$, then closed-loop system, which is a
nonlinear polynomial system, will become globally contracting.

% In this case, if we choose $w = [w_1,\ldots, w_6]^\intercal$ such that
% $\max_{x_1\in \reals} (3 w_5 x_1^2 + 2 w_3x_1 + w_1 + 1)\le 0$ and
% $\max_{x_2 \in \reals} (6 w_5x_2^2 + 4 w_4x_2 + 2w_2) \le -2$, the
% closed-loop system is globally contracting.
\begin{figure}[h]
\centering
\begin{subfigure}[b]{0.32\textwidth}
\centering
\includegraphics[width=\textwidth]{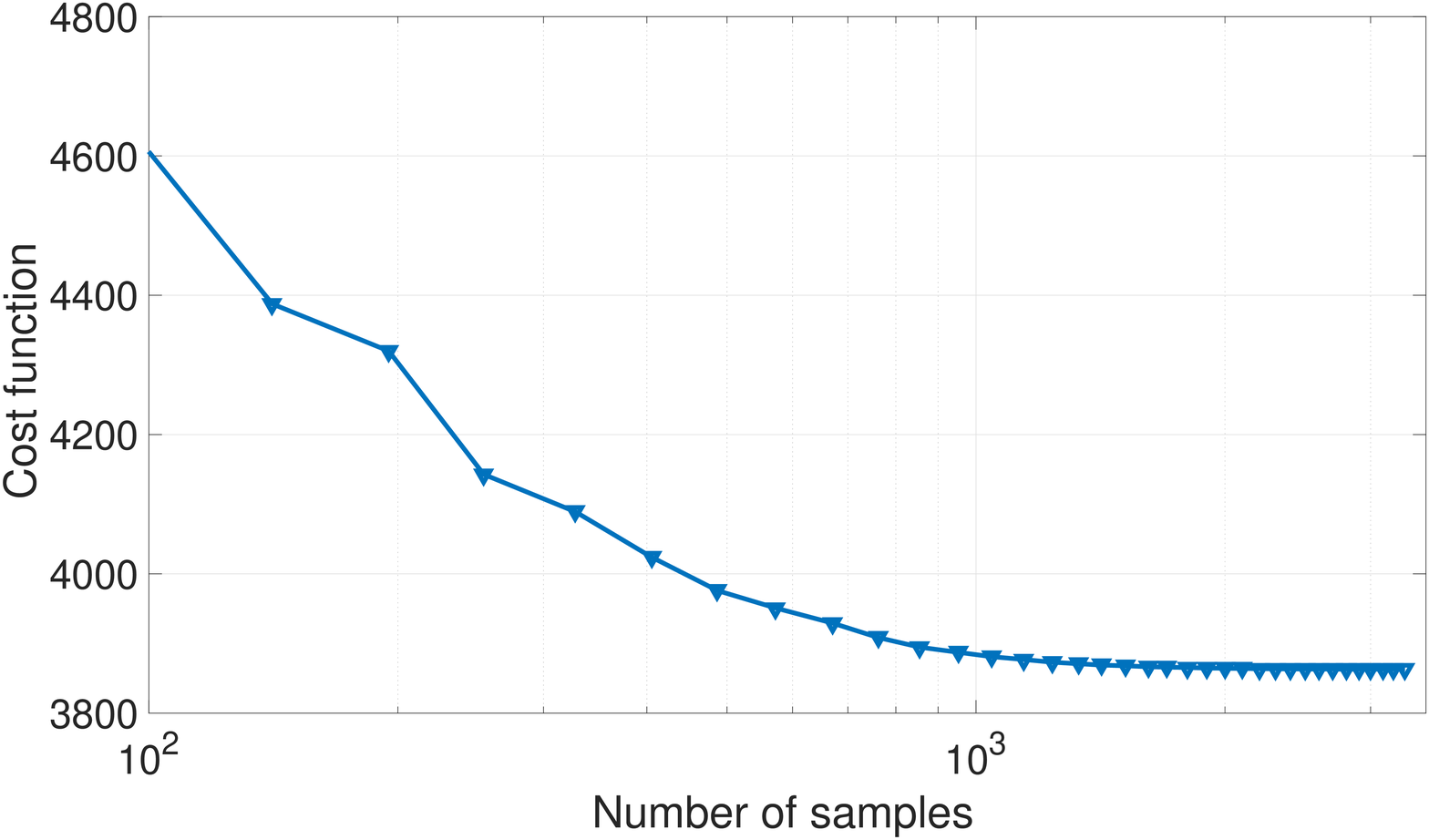}
\caption{\label{fig:cost}}
\end{subfigure}\hspace{-3ex}
\begin{subfigure}[b]{0.33\textwidth}
\centering
\includegraphics[width=\textwidth]{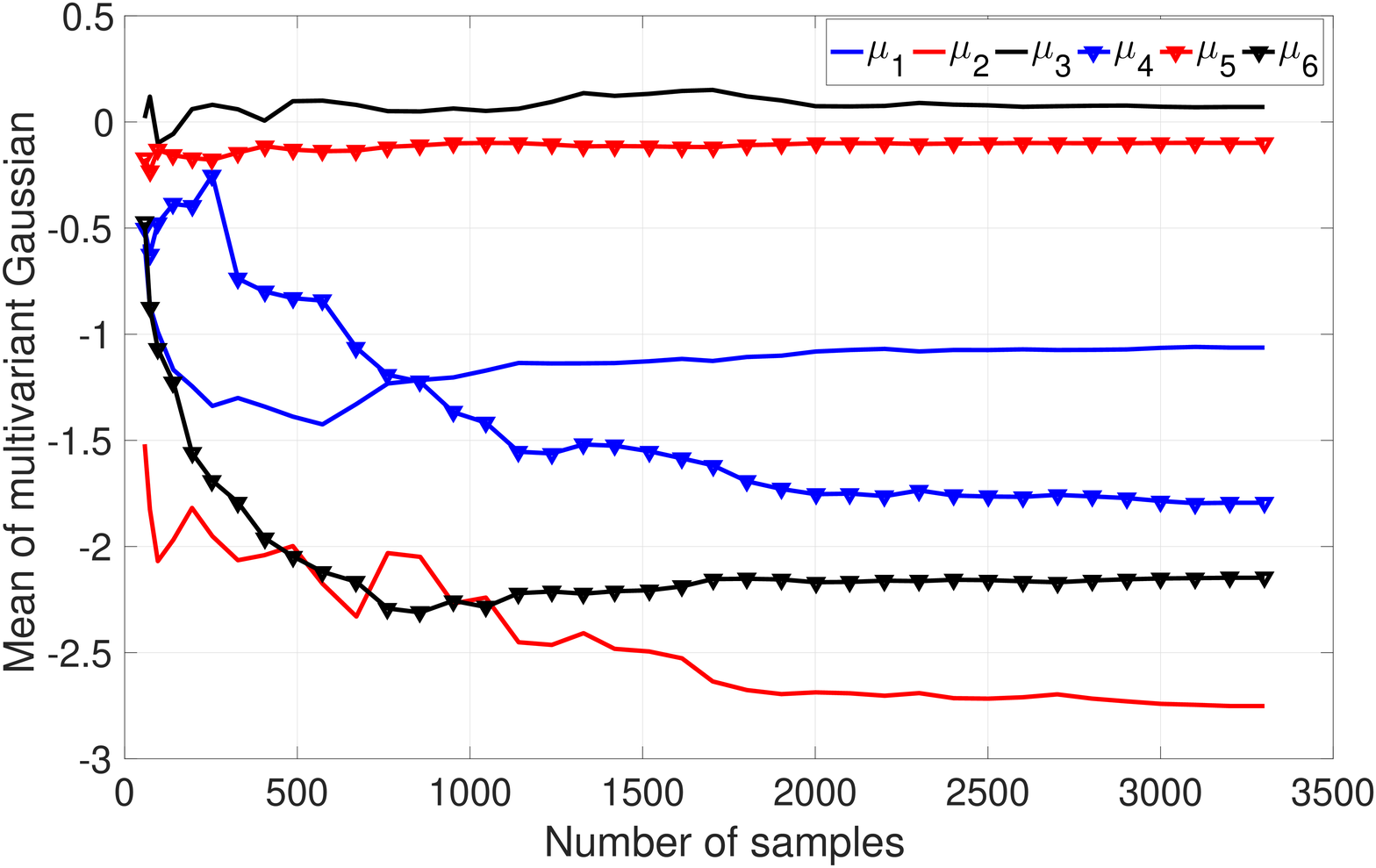}
\caption{\label{fig:mean}}
\end{subfigure}\hspace{-3ex}
\begin{subfigure}[b]{0.33\textwidth}
\centering
\includegraphics[width= \textwidth]{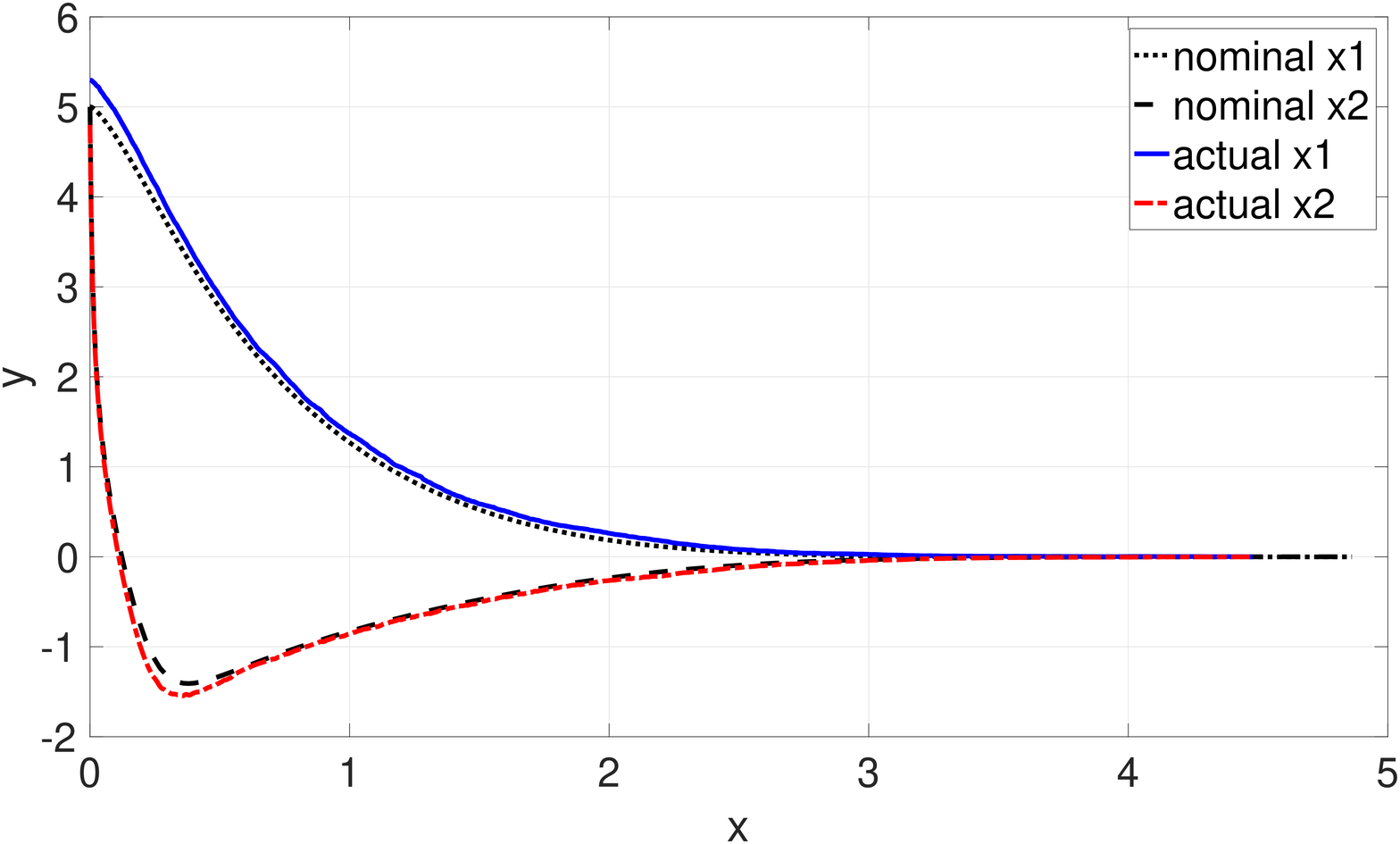}
\caption{\label{fig:statelinnoquad}}
\end{subfigure}
\caption{Convergence of the \ac{saop} algorithm on the \ac{lti} system
  with a nonquadratic cost functional. (a) The mean of multivariate Gaussian as weight vector
  over iterations. (b) The state trajectory of the closed-loop system
  under bounded disturbance $\rho_{\max}= 0.5$ under feedback
  controller computed with \ac{saop}.}
\label{fig:linquad}
\end{figure}
Figures~\ref{fig:cost} and \ref{fig:mean} show the convergence result
with \ac{saop} in one simulation in terms of cost and the mean of the
multivariant Gaussian over iterations. The following parameters are
used: Initial sample size $N_1 =50$, improvement parameter
$\epsilon =0.1$, quantile percentage $\rho=0.1$, smoothing parameter
$\lambda= 0.5$, sample increment parameter $\alpha = 0.1$. 

The algorithm converges after $38$ iterations with $3301$ samples to
the mean
$\bar w^\ast = [ -1.0629 \; -2.7517 \; 0\; -1.7939 \; -0.0987\;
-2.1474 ]^\intercal $
and the covariance matrix with a norm $\num{3.3401e-04}$.  Each
iteration took less than 10 seconds. The
approximate optimal cost under feedback controller
$u =\langle \bar w^\ast , \phi\rangle$ is $ 3863.3$.
Figure~\ref{fig:statelinnoquad} shows the state trajectory for the
closed-loop system with bounded disturbances.  With 25 independent
runs of \ac{saop}, the mean of $J(x_0; \bar w_i^\ast), i=1,\ldots, 25$
is $3903.3$ and the standard deviation is $ 104.1683 $, $2.6\% $ of
the approximate optimal cost.

Note, if we only use linear feedback $u= Kx$, the optimal cost is
$\num{ 1.0943e+04} $, which is about three times the optimal cost that
can be achieved with a nonlinear controller.

\subsection{Example: approximate optimal planning of a Dubins car} 

Consider a Dubins car dynamics
\[\dot x  = u\cos \theta,\quad \dot y  = u \sin \theta \; \quad \dot \theta = v\]
where $\vec{x}= (x,y,\theta)\in \reals^2\times \mathbb{S}^1$ being the
state (coordinates and turning angle with respect to $x$-axis) and $u$
and $v$ are control variables including linear and angular
velocities. The system is kinematically constrained by its positive
minimum turning radius $r$ which implies the following bound
$\abs{v} \le \frac{1}{r} $. Without loss of generality, we assume
$\abs{v}\le 5$ and $\abs{u}\le 10$ are the input constraints. The
control objective is to reach the goal $x_f=20, y_f=20$ while avoiding
static obstacles. The cost function
$J= \int_{0}^T \ell(x,u)dt + g(x,u)$ where $T=100$, the running cost
is $\ell(x,u) = 0.1\times (\norm{x} + \norm{u})$, and the terminal
cost is $g(x(T),u(T)) =1000\times \norm{(x(T),y(T))-(x_f,y_f)}$. The
initial state is $\vec{x}_0 = \origin$. In simulation, we consider the
robot reaches the target if $\norm{(x,y)'-(x_f,y_f)} \le \varepsilon$
for $\varepsilon \in [0, 1]$. In simulation, $\varepsilon =0.5$.

We select \ac{rbf} as basis functions and define
$\phi_{rbf} = [\phi_1,\ldots, \phi_N]^\intercal$ for $N$ center
points. In the experiment, the center points are
includes \begin{inparaenum}[1)]\item uniform grids in $x-y$
  coordinates with step sizes $\delta x = 5$, $\delta y =5$; and \item
  vertices of the obstacle. \end{inparaenum} We also include linear
basis functions $\phi_{linear} = [(x-x_f), (y-y_f),\theta]$.  The
basis vector is
$\phi = [\phi_{rbf}^\intercal, \phi_{linear}^\intercal]^\intercal$. We
consider a bounded domain $-5\le x \le 30$ and $-5\le y\le 30$ and
$\theta \in [0, 2\pi]$ and thus the total number of basis functions is
$80$.  The control input $\vec{u}= [u,v]^\intercal$ where
$u = w_u^\intercal \phi$ and $v =w_v^\intercal \phi $. The total
number of weight parameters is twice the number of bases and in this
case $160$.

\begin{figure}[t!]
\begin{subfigure}[t]{0.45\textwidth}
\centering
\includegraphics[width=\textwidth]{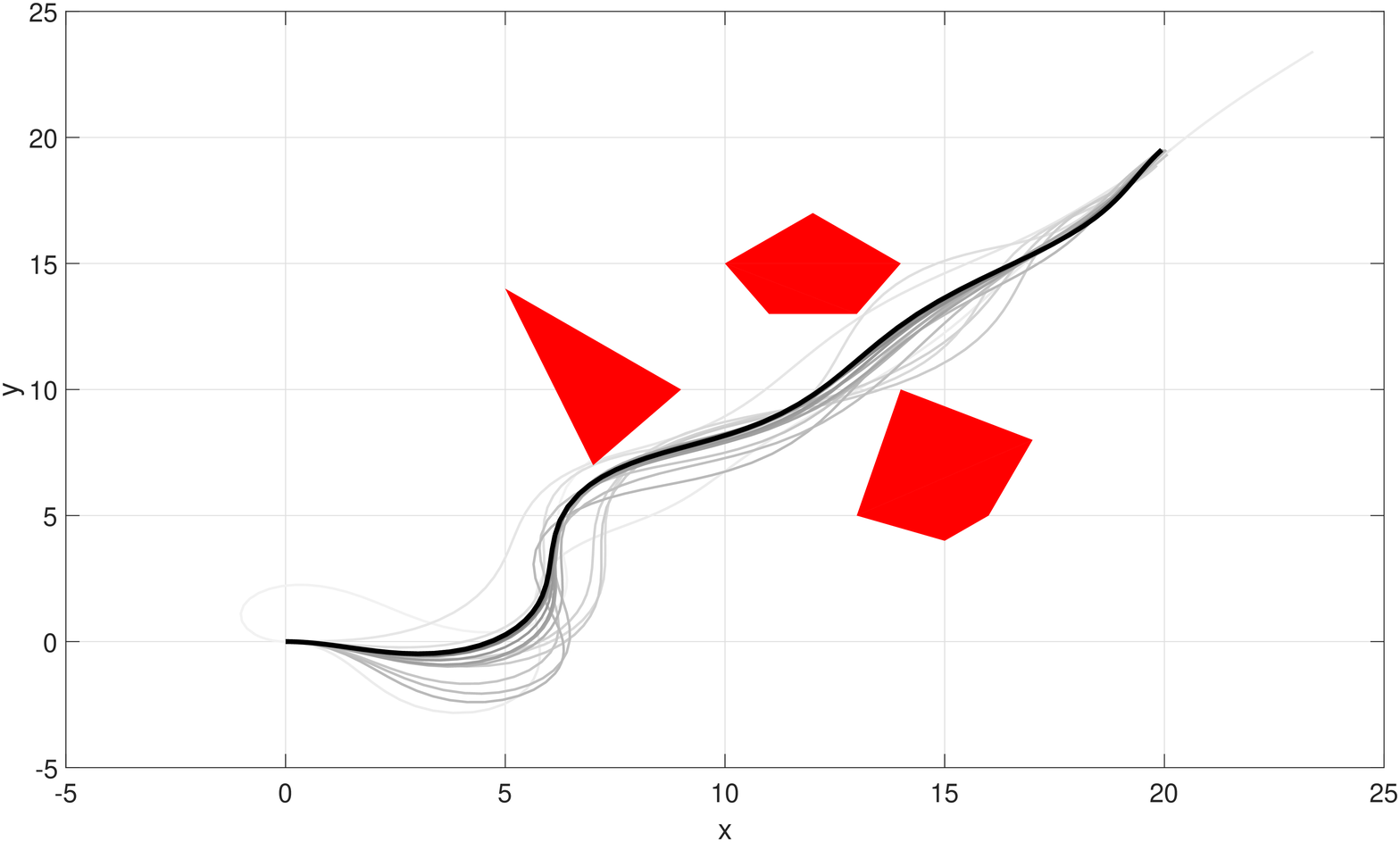}
\caption{\label{fig:dubinstraj}
}
\end{subfigure}
\begin{subfigure}[t]{0.45\textwidth}
\centering
\includegraphics[width= \textwidth]{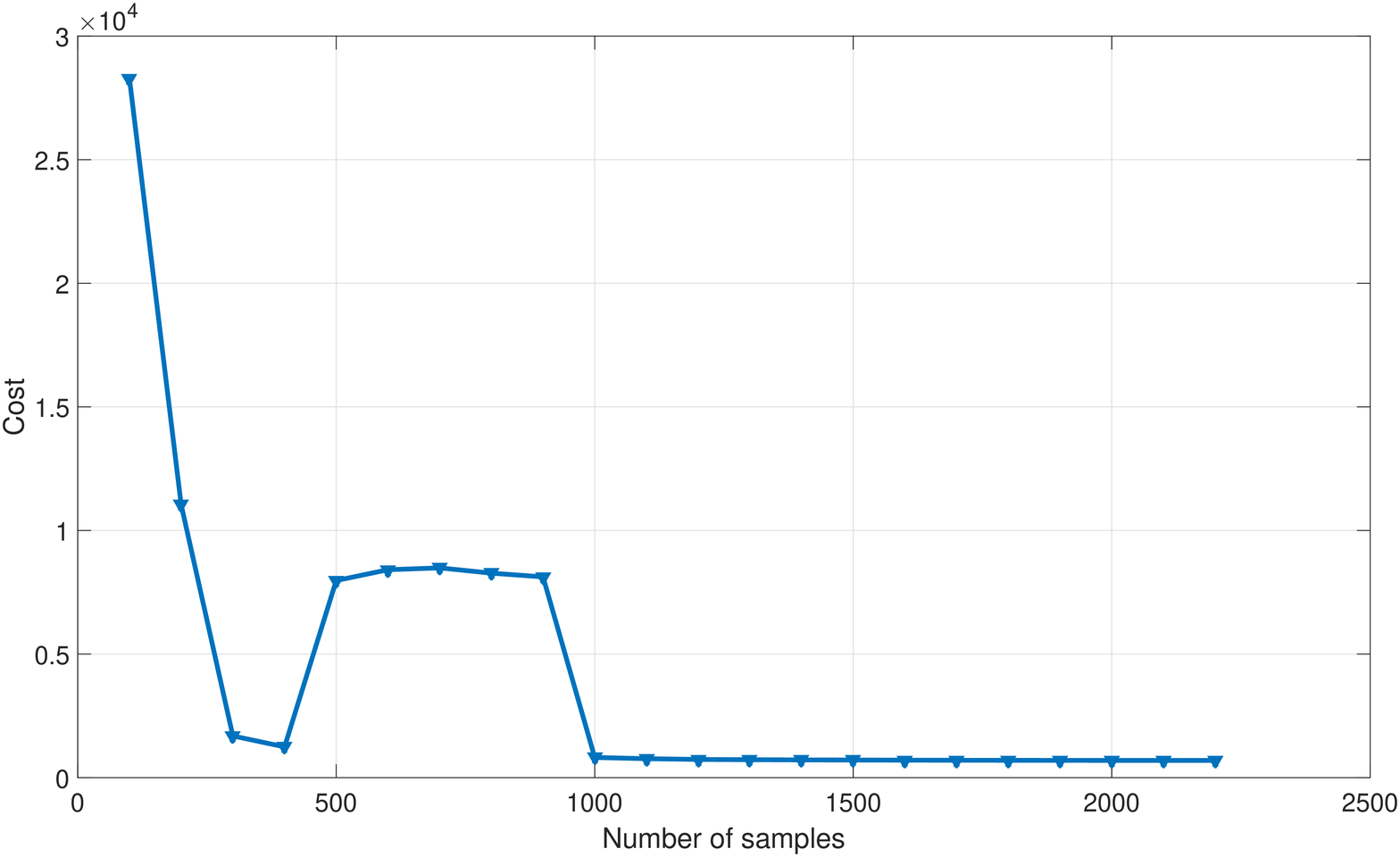}
\caption{\label{fig:dubinscost}}
\end{subfigure}
\caption{Convergence of the the planning algorithm algorithm on the Dubins
  car. (a) The planned trajectory under feedback policy
  $\langle \mu, \phi \rangle$ computed using the mean of multivariate
  Gaussian over iterations (from the lightest to the darkest). (b) The convergence of the covariance
  matrix. (c) The total cost evaluated at the mean of the multivariate
  Gaussian over iterations. }
\label{fig:dubinscar_conv}
\vspace{-3ex}
\end{figure}

% \begin{figure}[t!]
% \begin{subfigure}[b]{0.3\textwidth}
% \centering
% \includegraphics[width=\textwidth]{dubinscar_noise2d}
% \label{fig:dubinscar_noise2d}
% \caption{}
% \end{subfigure}
% \begin{subfigure}[b]{0.3\textwidth}
% \centering
% \includegraphics[width= \textwidth]{dubinscar_x}
% \label{fig:dubinscar_x}
% \caption{}
% \end{subfigure}
% \begin{subfigure}[b]{0.3\textwidth}
% \centering
% \includegraphics[width= \textwidth]{dubinscar_y}
% \label{fig:dubinscar_y}
% \caption{}
% \end{subfigure}
% \caption{Feedback controller performance under bounded noise
%   $\norm{\omega} \le 0.1$ and initial error. (b)-(c) The state
%   trajectories starting from the initial state. The dash line is the
%   real state trajectory and the solid line is the nominal state
%   trajectory.}
% \label{fig:dubinscar_noise}
% \end{figure}

The following parameters are used: Initial sample size $N_1= 100$,
improvement parameter $\epsilon = 0.1 $, smoothing parameter
$\lambda= 0.5$, sample increment percentage $\alpha=0.1$, and
$\rho = 0.1$. In Fig.~\ref{fig:dubinstraj} we show the trajectory
computed using the estimated mean of multivariate Gaussian
distribution over iterations, from the lightest ($1$-th iteration) to
the darkest (the last iteration when stopping criterion is met). The
optimal trajectory is the darkest line. In Fig.~\ref{fig:dubinscost}
we show the cost computed using the mean of multivariate Gaussian over
iterations.  \ac{saop} converges after 22 iterations with $2200$
samples and the optimal cost is $697.29$. Each iteration took about
$20$ to $ 30$ seconds.  However, it generates a collision-free path
only after $5$ iterations. Due to input saturation, the algorithm is
only ensured to converge to a local optimum. However, in 24
independent runs, all runs converges to a local optimum closer to the
global one, as shown in the histogram in Fig.~\ref{fig:dubins_histo}.
Our current work is to implement trajectory-based contraction analysis
using time-varying matrices $M(x,t)$ and adaptive bound $\beta$, which
are needed for nonlinear Dubins car dynamics.

\begin{figure}
\centering
\includegraphics[width= 0.5 \textwidth]{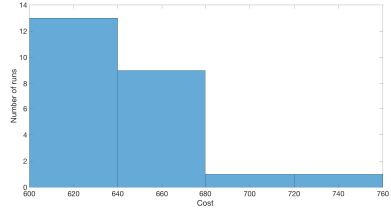}
\caption{The frequency distribution of the optimal costs with 24
  independent runs.}
\label{fig:dubins_histo}
\end{figure}

\section{Conclusion}
\label{sec:conclude}
In this paper, an importance sampling-based approximate optimal
planning and control method is developed. In the control-theoretic
formulation of optimal motion planning, the planning algorithm
performs direct policy computation using simulation-based adaptive
search for an optimal weight vector corresponding to an approximate
optimal feedback policy. Each iteration of the algorithm runs time
linear in the number of samples and in the time horizon for simulated
runs. However, it is hard to quantify the number of iterations
required for \ac{mras} to converge. One future work is to consider
incorporate multiple-distribution importance sampling to achieve
faster and better convergence results.  Based on contraction analysis
of the closed-loop system, we show that by modifying the
sampling-based policy evaluation step in the algorithm, the proposed
planning algorithm can be used for joint planning and robust control
for a class of nonlinear systems under bounded disturbances. In future
extension of this work, we are interested in extending this algorithm
for stochastic optimal control.
% In our future work, we are interested in developing algorithm that not
% only updates the parameterized distribution for weight vectors but also
% adapts the choice of basis functions. Similar to adaptive mesh
% refinement \cite{Yershov25102015}, the key idea is to use the optimal
% policy guide the choice of the selected basis functions so that more
% center points are selected for RBF basis functions around the optimal
% trajectory and thus a better approximation.

% ---- Bibliography ----
%
\bibliographystyle{IEEEtran}
% \bibliography{refs}

% Generated by IEEEtran.bst, version: 1.13 (2008/09/30)

% \clearpage
% \addtocmark[2]{Author Index} % additional numbered TOC entry
% \renewcommand{\indexname}{Author Index}
% \printindex
% \clearpage
% \addtocmark[2]{Subject Index} % additional numbered TOC entry
% \markboth{Subject Index}{Subject Index}
% \renewcommand{\indexname}{Subject Index}
% % \input{subjidx.ind}
% \end{document}

\end{document}